 \newtheorem{fact}{{\em fact}}
 \newcommand{\logstar}{\log^*\hspace{-.9mm}}
\begin{document}
\title{
A Searchable Compressed Edit-Sensitive Parsing
}

\author{
Naoya Kishiue\inst{1}
Masaya Nakahara\inst{1}
Shirou Maruyama\inst{2}, and\\
Hiroshi Sakamoto\inst{1,3}
}

\institute{
Kyushu Institute of Technology, 680-4 Kawazu, Iizuka-shi, Fukuoka, 820-8502, 
\and Kyushu University, 744 Motooka, Nishi-ku, Fukuoka-shi, Fukuoka 819-0395,
\and PRESTO JST, 4-1-8 Honcho Kawaguchi, Saitama 332-0012, JAPAN\\
\email{
\{n\_kishiue,m\_nakahara\}@donald.ai.kyutech.ac.jp,
shiro.maruyama@i.kyushu-u.ac.jp, 
hiroshi@ai.kyutech.ac.jp,}
}

\date{\empty}
\maketitle

\begin{abstract}
Practical data structures for the edit-sensitive parsing (ESP) are proposed.
Given a string $S$, its ESP tree is equivalent to a context-free 
grammar $G$ generating just $S$, which is represented by a DAG.
Using the succinct data structures for trees and permutations,
$G$ is decomposed to two LOUDS bit strings and single array
in $(1+\varepsilon)n\log n+4n+o(n)$ bits for any 
$0<\varepsilon <1$ and the number $n$ of variables in $G$.
The time to count occurrences of $P$ in $S$ is in
$O(\frac{1}{\varepsilon}(m\log n+occ_c(\log m\log u))$, whereas $m = |P|$, $u = |S|$,
and $occ_c$ is the number of occurrences of a maximal common subtree 
in ESPs of $P$ and $S$.
The efficiency of the proposed index is evaluated by 
the experiments conducted on several benchmarks complying 
with the other compressed indexes.
\end{abstract}



\section{Introduction}
The edit distance is one of the most fundamental problems 
with respect to every string in dealing with the text.
Exclusively with the several variants of this problem, 
the {\em edit distance with move} where moving operation 
for any substring with unit cost is permitted is NP-hard
and $O(\log u)$-approximable~\cite{Shapira07} for string length $u$.
With regard to the matching problem whose approximate solution 
can be obtained by means of {\em edit-sensitive parsing} 
(ESP) technique~\cite{Cormode07}, 
utilization of detected maximal common substrings 
makes it possible to expect application of the problem 
to plagiarism detection and clustering of texts.
As a matter of fact, a compression algorithm based on ESP 
has been proposed~\cite{Sakamoto09}, which results 
in exhibition of its approximation ratio for the optimum compression.

In this work, we propose a practical compressed index for ESP.
Utilization of a compressed index makes it possible to 
search patterns rapidly, which is regarded as a specific case of 
maximum common substrings of the two strings where one is 
entirely in the other.
Comparison of the compressed index proposed in this work with 
the indexes dealt with in the other methods reveals that 
sufficient performance is provided in accordance with the proposed method.
On the other hand, it is shown from theoretical analysis of ESP
that thanks to the proposed method, 
a long enough common substring of the two strings of the text 
and pattern can be found rapidly from the compressed index.

Edit distance problem is closely related to optimum compression.
Particularly with one of the approximation algorithms, 
assigning a same variable to common subtrees allows
approximately optimum parsing tree, 
i.e. approximately optimum CFG to be computed.
This optimization problem is not only NP-hard but also 
$O(\log n)$-approximable~\cite{Charikar05,Rytter03,Sakamoto05}.
As a consequence, compressing two strings and finding out 
occurrences of a maximal subtree from these parsing trees make 
it possible to determine with great rapidity whether one string 
manifests itself in another in a style of a substring.

Our contributions are hereunder described.
The proposed algorithm for indexed grammar-based compression 
outputs a CFG in Chomsky normal form.
The said CFG, which is equivalent to a DAG G where 
every internal node has its left and right children,
is also equivalent to the two spanning trees.
The one called the left tree is exclusively constructed by the left edges, 
whereas the one called the right tree is exclusively 
constructed by the right edges.
Both the left and the right trees are encoded by LOUDS~\cite{Delpratt06}, 
one of the types of the succinct data structure for ordered trees. 
Furthermore the correspondence among the nodes of the trees 
is memorized in an array.
Adding the data structure for the permutation~\cite{Munro03} 
over the array makes 
it possible to traverse the $G$. 
Meanwhile it is possible for the size of the data structure 
to be constructed with $(1+\varepsilon)n\log n+4n+o(n)$ bits
for arbitrary $0<\varepsilon <1$, 
where $n$ is the number of the variables in the $G$.

At the next stage, the algorithm should refer to a 
function, called {\em reverse dictionary}
for the text when compression of the pattern is executed.
For example, if a production rule $Z\to XY$ is included in $G$, 
an occurrence of the digram $XY$ in a pattern,
which is determined to be replaced, 
should be replaced without fail by the same $Z$.
Taking up the hash function $H(XY)=Z$ for the said purpose compels 
the size of the index to be increased. 
Thus we propose the improvement for compression so as to 
obtain the name $Z$ directly from the compression.
It is possible to calculate the number of occurrences of 
a given pattern $P$ from a text $S$ in 
$O(\frac{1}{\varepsilon}(m\log n+occ_c(\log m\log u))$ time
in accordance with the contrivance referred to above together 
with the characteristics of the ESP, where $m = |P|$ and $u = |S|$.
On the other hand, $occ_c$ is the occurrence number of maximal 
common subtree called a core in the parsing tree for $S$ and $P$.
The core is obtained from ESP for $S$ and $P$, 
and it is understood that a constant $\alpha$ is in existence 
to show the lower bound that a core encodes a substring longer 
than $\alpha m$.

At the final stage, comparison is made between the performance 
of our method and that of the other practical 
compressed indexes~\cite{Navarro04JDA,Navarro07,Sadakane03}, called
Compressed Suffix Array (and RLCSA, improved to repetitive texts), 
FM-index, and LZ-index.
Compressed indexes to comply with 200MB English texts, 
DNA sequences, and other repetitive texts are constructed. 
Thereafter comparison is made with the search time to count 
occurrences of patterns to correspond to the pattern length.
As a result, it is ascertained that the proposed index is
efficient enough among these benchmarks 
in case the pattern is long enough to accomplish 
the construction of the indexes.

\section{Preliminaries}
The set of all strings over an alphabet $\Sigma$ 
is denoted by $\Sigma^*$.
The length of a string $w \in \Sigma^*$ is denoted by $|w|$.
A string $\{a\}^*$ of length at least two is called 
a {\em repetition of \/} $a$.
$S[i]$ and $S[i,j]$ denote the $i$-th symbol of $S$ and
the substring from $S[i]$ to $S[j]$, respectively.
The expression $\logstar n$ indicates the maximum number
of logarithms satisfying $\log\log\cdots\log n\geq 1$.
For instance, $\logstar n=5$ for $n=2^{65536}$.
We thus treat $\logstar n$ as a constant.

We assume that any context-free grammar $G$ is 
{\em adimissible}, i.e., $G$ derives just one string. 
For a production rule $X\to AB\cdots C$, symbol $X$ 
is called {\em variable}.
If $G$ derives a string $w$, the derivation is represented by
a rooted ordered tree, called the {\em parsing tree} of $G$.
The {\em size of $G$\/} is the total length of strings in the right
hand sides of all production rules, and is denoted by $|G|$. 
The optimization for the {\em grammar-based compression} is 
to minimize the size of $G$ deriving a given string $w$.
For the approximation ratio of this problem, 
see \cite{Charikar05,Rytter03,Sakamoto05,Sakamoto09}.

We consider a special parsing tree of CFG constructed by 
{\em edit sensitive parsing\/} by~\cite{Cormode07},
which is based on a transformation of string called 
{\em alphabet reduction\/}.
A string $S\in\Sigma^*$ of length $n$ is partitioned into 
maximal nonoverlapping substrings of three types;
Type1 is a maximal repetition of a symbol,
Type2 is a maximal substring longer than $\logstar n$
not containing any repetition, and
Type3 is any other short substring.
Each such substring is called a {\em metablock\/}.
We focus on only Type2 metablocks since the others are not 
related to the alphabet reduction.
From a Type2 string $S$, a label string $label(S)$ is computed as follows.

{\bf Alphabet reduction\/}:
Consider $S[i]$ and $S[i-1]$ represented as binary integers.
Denote by $\ell$ the least bit position in which $S[i]$ differs from $S[i-1]$.
For instance, if $S[i]=101,S[i-1]=100$ then
$\ell=0$, and if $S[i]=001,S[i-1]=101$ then $\ell=2$.
Let $bit(\ell,S[i])$ be the value of $S[i]$ at $\ell$.
Then $label(S[i])=2\ell + bit(\ell,S[i])$.
By this, a string $label(S)$ is obtained as the sequence of such $label(S[i])$.

For the resulting $label(S)$, $label(S[i])\neq label(S[i+1])$ if $S[i]\neq S[i+1]$ for any $i$ 
(See the proof by~\cite{Cormode07}).
Thus the alphabet reduction is recursively applicable to $label(S)$, which is also Type2.
If the alphabet size in $s$ is $\sigma$, the new alphabet size in
$label(S)$ is $2\log\sigma$. 
We iterate this process for the resulting string $label(S)$
until the size of the alphabet no longer shrinks.
This takes $\logstar\sigma$ iterations.

After the final iteration of alphabet reduction, 
the alphabet size is reduced to at most $6$ like $\{0,\cdots,5\}$.
Finally we transform $label(S)\in\{0,\cdots,5\}^*$ 
to the same length string in $label(S)\in\{0,1,2\}^*$ 
by replacing each $3$ with the least integer in $\{0,1,2\}$
that does not neighbor the $3$, and
doing the same replacement for each $4$ and $5$.
We note that the final string $label(S)$ is also Type2 string.
This process is illustrated for a concrete string $S$ in Fig.~\ref{reduction}.

{\bf Landmark\/}: For a final string $label(S)$,
we pick out special locations called landmarks
that are sufficiently close together.
We select any position $i$ as a landmark if $label(S[i])$ is maximal,
i.e., $label(S[i])>label(S[i-1]),label(S[i+1])$.
Following this, we select any position $j$ as a landmark
if $label(S[j])$ is minimal and both $j-1,j+1$ are not selected yet. 
We also display this selection of landmarks in Fig.~\ref{reduction}.

\begin{figure}[bt]
\begin{center}
\includegraphics[scale=.45]{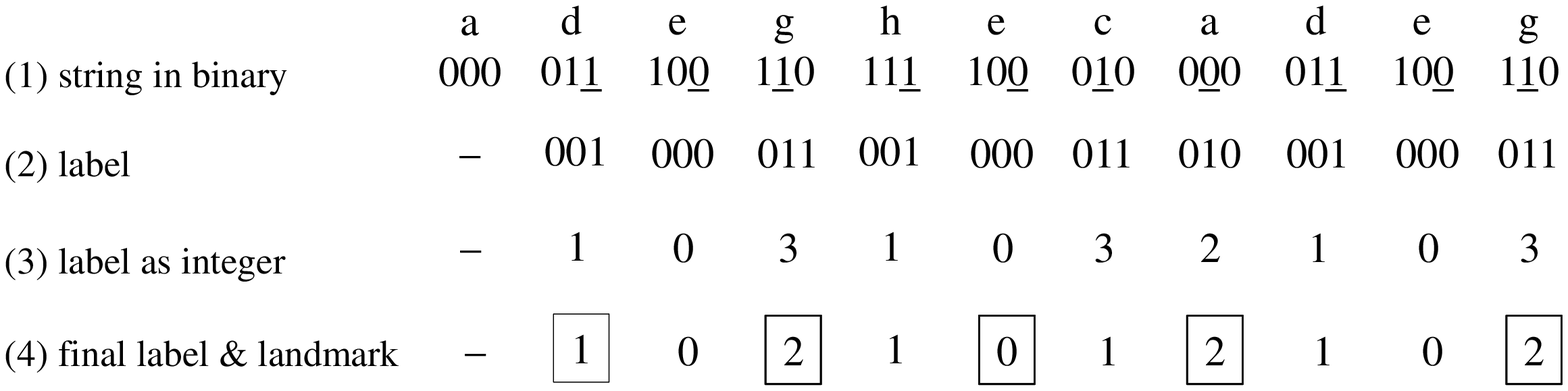}
\end{center}
\caption{
Alphabet reduction:
The line (1) is an original Type2 string $S$ from the alphabet
$\{a,b,\cdots,h\}$ with its binary representation.
An underline denotes the least different bit position to the left. 
(2) is the sequence of $label(S[i])$ formed from
the alphabet $\{0,1,2,3\}$ whose size is less than $6$, and
(3) is its integer representation.
(4) is the sequence of the final labels reduced to $\{0,1,2\}$
and the landmarks indicated by squares.
}
\label{reduction}
\end{figure}

\begin{figure}[tb]
\begin{center}
\includegraphics[scale=.45]{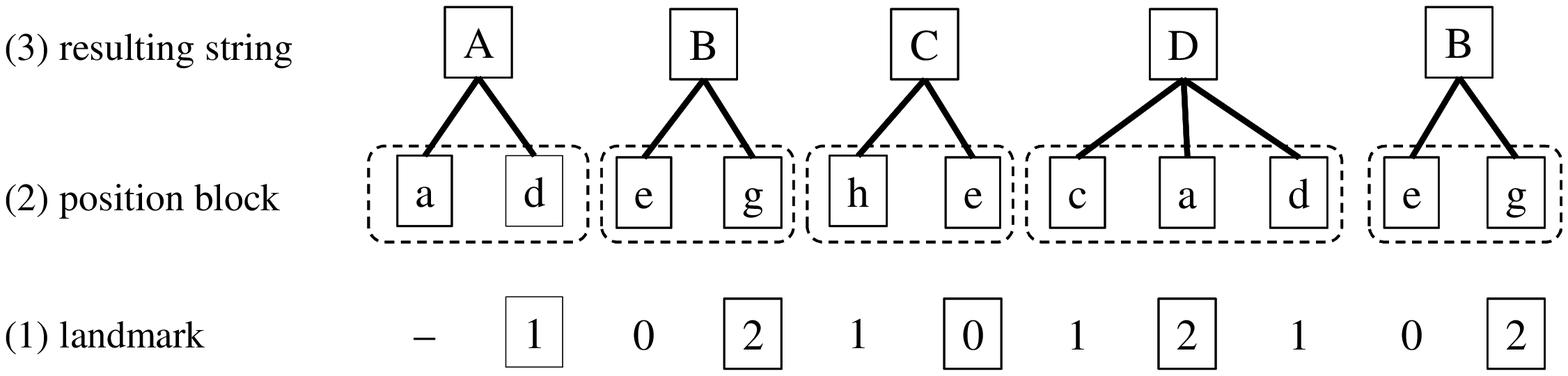}
\end{center}
\caption{
Single iteration of ESP:
The line (1) is the computed final labels and landmarks.
(2) shows the groups of all positions in $s$
having two or three around the landmarks.
(3) is the resulting string $ABCDB$, and
the production rules $A\to ad$, $B\to eg$, etc.
}
\label{esp}
\end{figure}

{\bf Edit sensitive parsing\/}:
After computing final string $label(S)$ and its landmarks for a Type2 string $S$, 
we next partition $S$ into blocks of length two or three around
the landmarks in the manner:
We make each position part of the block generated by its
closest landmark, breaking ties to the right.

Since $label(S)\in\{0,1,2\}^*$ contains no repetition,
for any two successive landmark positions $i$ and $j$,
we have $2\leq |i-j|\leq 3$.
Thus, each position block is of length two or three.
The string $S$ is transformed to a shorter string $S'$ 
by replacing any block of two or three symbols to a new suitable symbol.
Here ``suitable'' means that any two blocks for a same substring
must be replaced by a same symbol.
This replacement is called {\em edit sensitive parsing\/} (ESP).
We illustrate single iteration of ESP for determined blocks in Fig.~\ref{esp}.

Finally, we mention Type1 or Type3 string $S$.
If $|S|\geq 2$, we parse the leftmost two symbols of $S$
as a block and iterate on the remainder and
if the length of it is three, then we parse the three symbols as a block. 
We note that no Type1 $S$ in length one exists.
The remaining case is Type3 $S$ and $|S|=1$, which appears in a context $a^*bc^*$.
If $|a^*|=2$, $b$ is parsed as the block $aab$.
If $|a^*|>2$, $b$ is parsed as the block $ab$.
If $|a^*|=0$, $b$ is parsed with $c^*$ analogously.

If $S$ is partitioned into $S_1,\ldots,S_k$ of Type1, Type2, or Type3,
after parsing them, all the transformed strings $S'_i$ are concatenated together.
This process is iterated until a tree for $S$ is constructed.
By the parsing manner, we can obtain a balanced $2-3$ tree,
called {\em ESP tree}, in which any internal node has two or three children.

\section{Algorithms and Data Structures}
In this section, it is shown that searching a pattern in a text 
is reduced to finding some adjacent subtrees in the ESP trees 
corresponding to the pattern and text. 
This problem is solved by practical algorithms and data structures.

\subsection{Basic notions}
A set of production rules of a CFG is represented by 
a directed acyclic graph (DAG) with the root labeled by the start symbol.
In Chomsky normal form hereby taken up, each internal node has 
respectively two children called the left/right child, 
and each edge is also called the left/right edge.
An internal node labeled by $X$ with 
left/right child labeled by $A$/$B$ is corresponding to 
the production rule $X\to AB$.
We note that this correspondence is one-to-one so that the DAG of a CFG 
$G$ is a compact representation of the parsing tree $T$ of $G$.
Let $v$ be a node in $T$, and the subtree of $v$ is the induced 
subgraph by all descendant of $v$.
The parent, left/right child, and variable on a node $v$ is 
denoted by $parent(v)$, $\mbox{\em left}(v)/\mbox{\em right}(v)$, 
and $label(v)$, respectively.

A {\em spanning tree\/} of a graph $G$ is a subgraph of $G$
which is a tree containing all nodes of $G$.
A spanning tree of a DAG is called {\em in-branching} 
provided that the out-degree of each node except the root 
is a single entity, and the {\em out-branching} spanning tree 
is the reverse notion.

With respect to an ordered binary tree $T$, a node $v$ is called 
the {\em lowest right ancestor} of a node $x$ and is denoted by $lra(x)$,
provided that $v$ is the lowest ancestor so that the path from $v$ 
to $x$ will contain at least one left edge.
If $x$ is a node in the right most path in $T$, $lra(x)$ is undefined.
Otherwise, $lra(x)$ is uniquely decided.
The subtree of $x$ is {\em left adjacent} to the subtree of $y$
provided that $lra(x) =lla(y)$,
thus the {\em adjacency in the right} is similarly defined.
These notions are illustrated in Fig.~\ref{adjacent}, 
from which the characterization shown below can be obtained.

\begin{fact}\rm\label{lra}
For an ordered binary tree,
a node $y$ is right adjacent to a node $x$ 
iff $y$ is in the left most path from $\mbox{\em right}(lra(x))$,
and $y$ is left adjacent to $x$ iff
$y$ is in the right most path from $\mbox{\em left}(lla(x))$.
\end{fact}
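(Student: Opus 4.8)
The plan is to reduce the statement to a structural description of where $x$ sits beneath $lra(x)$ and where $y$ sits beneath $lla(y)$, after which the defining equation $lra(x)=lla(y)$ of right adjacency converts into the claimed path condition almost mechanically. I would treat only the right-adjacency equivalence in detail; the left-adjacency one then follows by the mirror (left$\leftrightarrow$right) symmetry of every notion involved.

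First I would prove a canonical-form observation for $lra$: for any node $x$ not on the rightmost path, if $v=lra(x)$ then the downward path from $v$ to $x$ consists of exactly one left edge, from $v$ to $\mbox{\em left}(v)$, followed by zero or more right edges. Indeed, by definition $v$ is the lowest ancestor whose path to $x$ uses a left edge, so no edge strictly below $v$ on the path to $x$ can be a left edge (such an edge would exhibit a lower ancestor already having a left edge on its path to $x$); hence the first edge out of $v$ toward $x$ is the unique left edge and all subsequent edges are right edges. Equivalently, $x$ lies on the rightmost path emanating from $\mbox{\em left}(lra(x))$. By the symmetric argument, $w=lla(y)$ holds iff the path from $w$ to $y$ is a single right edge to $\mbox{\em right}(w)$ followed by left edges only, i.e. $y$ lies on the leftmost path emanating from $\mbox{\em right}(lla(y))$.

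Then I would combine the two. By definition $y$ is right adjacent to $x$ iff $lra(x)=lla(y)$. For the forward direction, set $v=lra(x)=lla(y)$; the second canonical form places $y$ on the leftmost path from $\mbox{\em right}(lla(y))=\mbox{\em right}(lra(x))$, which is the asserted condition. For the converse, suppose $y$ lies on the leftmost path from $\mbox{\em right}(lra(x))$. Then ascending from $y$ one meets only left edges until reaching $\mbox{\em right}(lra(x))$, and the very next edge, from $lra(x)$ to $\mbox{\em right}(lra(x))$, is a right edge; hence $lra(x)$ is exactly the lowest ancestor of $y$ whose path to $y$ contains a right edge, i.e. $lla(y)=lra(x)$, so $y$ is right adjacent to $x$. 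The left-adjacency equivalence is obtained verbatim after swapping the roles of left and right (and of $lra$ and $lla$).

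The main obstacle, though it is structural rather than computational, is the canonical-form observation of the second paragraph: one must argue cleanly that ``lowest right ancestor'' forces the unique shape ``one left edge then a run of right edges,'' and in particular rule out any further left edge below $v$. Care is also needed at the boundary, where $lra(x)$ is undefined precisely when $x$ lies on the rightmost path of $T$; in that case no node is right adjacent to $x$, consistent with $\mbox{\em right}(lra(x))$ being undefined, and the symmetric remark handles $lla$. Once these edge conditions are isolated, the two equivalences are immediate.
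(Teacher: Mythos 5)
Your proof is correct. The paper gives no written proof of this fact at all---it is asserted as readable off the illustration in Fig.~\ref{adjacent}---and your canonical-form lemma (the path from $lra(x)$ down to $x$ is exactly one left edge followed by a run of right edges, equivalently $x$ lies on the rightmost path from $\mbox{\em left}(lra(x))$, and symmetrically for $lla$) is precisely the structural content that figure depicts, so your argument is a rigorous rendering of the paper's intended justification, with the converse direction and the boundary case ($lra(x)$ undefined when $x$ is on the rightmost path) handled explicitly where the paper leaves them implicit.
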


Checking such adjacency is a basic operation of the proposed algorithm
to decide the existence of patterns from the compressed string. 
The efficiency is guaranteed by several techniques 
introduced in the following subsections.

\begin{figure}[bt]
\begin{center}
\includegraphics[scale=.50]{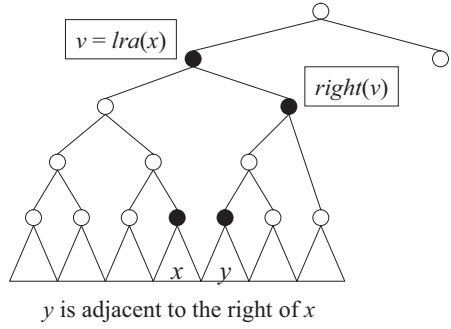}
\end{center}
\caption{
The relation of two nodes $x$ and $y$ in a rooted ordered binary tree.
They are adjacent in this order iff
$y$ is in the left most path from $\mbox{\em right}(lra(x))$ as illustrated.
}
\label{adjacent}
\end{figure}

\subsection{Pattern embedding on parsing tree}
For two parsing trees of strings $P$ and $S$,
if there is a common subtree for them,
then its root variable is called a {\em core}.
It is shown that with respect to each of strings $P$ and $S$, 
these ESP trees concerning a same naming function contain
a sufficiently large core $X$ provided $S$ contains $P$.
This property is available as a necessary condition in searching $P$. 
In other words, any occurrence of $P$ in $S$ is restricted 
in a region around $X$.

\begin{lemma}\label{core-lemma}\rm
There exists a constant $0<\alpha<1$
such that for any occurrence of $P$ in $S$,
its core is encoding a substring longer than $\alpha |P|$. 
\end{lemma}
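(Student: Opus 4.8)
The plan is to exploit the \emph{locality} of alphabet reduction and landmark selection and to iterate it over the $\Theta(\log m)$ levels of the ESP tree, controlling how fast the two parsings of $P$ and of its occurrence inside $S$ can diverge at the boundaries. First I would fix one occurrence $S[i,i+m-1]=P$ and compare, level by level, the ESP applied to $P$ with the ESP applied to the corresponding window inside $S$. The starting point (already proved in~\cite{Cormode07}) is that the final label, and hence the landmark status, of a position depends only on a left context of $O(\logstar n)$ symbols and on $O(1)$ symbols to its right. Consequently, in a single iteration the block boundaries chosen inside $P$ and inside the matching window of $S$ coincide everywhere except within a prefix and a suffix of $O(\logstar n)$ symbols, where the extra context available in $S$ can change the decision.

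Next I would turn this into a quantitative, level-by-level statement. Writing $D_j$ for the width, measured in level-$j$ symbols, of the region at each end on which the two parsings may disagree, the inherited disagreement is compressed by the blocking factor while a fresh $O(\logstar n)$ is added at the new level, giving a recurrence of the form
\[
D_{j+1}\le\left\lceil D_j/2\right\rceil+c\,\logstar n .
\]
Its fixed point is $O(\logstar n)$, so at \emph{every} level the boundary disagreement is confined to a constant number of blocks, recalling that $\logstar n$ is treated as a constant. Hence at each level the central part of the two block sequences agrees, and this common block sequence defines a common subtree of the two ESP trees.

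Finally I would extract the core from the surviving common region. Since each iteration contracts the string by a factor in $[2,3]$, after $j^{\ast}=\Theta(\log m)$ levels the image of $P$ consists of only $O(\logstar n)$ symbols, and up to that level the common region stays nonempty. A block that is finalized as a maximal common subtree at level $j$ carries $O(3^{j})$ original symbols, and there are only $O(\logstar n)$ such blocks per level, so the total finalized mass is a geometric sum dominated by its top term at level $j^{\ast}$; the $O(\logstar n)$ common subtrees surviving to level $j^{\ast}$ therefore account together for a constant fraction of the $m$ symbols of $P$. As their number is constant, the largest of them is a single common variable whose yield exceeds $\alpha m$ for a fixed $0<\alpha<1$, and this variable is the required core.

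The hard part will be the uniform bound of the second step: making the fixed-point argument rigorous requires checking that the left dependence genuinely shrinks under blocking rather than accumulating across levels, and that the ``closest landmark, ties to the right'' rule does not let the disagreement leak an extra block at each step. Once the disagreement is pinned at $O(\logstar n)$ blocks per level, the mass-counting in the last step is routine, the only subtlety being to note that the geometric sum places a constant fraction of the characters in the top-level blocks so that the constant $\alpha$ survives after $\logstar n$ is absorbed.
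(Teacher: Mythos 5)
Your overall route is the same as the paper's: the locality of alphabet reduction and landmark selection (left context $O(\logstar n)$, right context $O(1)$), a per-level boundary disagreement that stabilizes at $O(\logstar n)$ current-level symbols (your recurrence $D_{j+1}\le\lceil D_j/2\rceil+c\,\logstar n$ is exactly the stabilization the paper uses implicitly when it says each iteration loses a prefix and suffix of length at most $\logstar n+5$), and a geometric accounting of the lost mass. The genuine gap is in your final step. You run the parsing up to the top level $j^{\ast}$, where the image of $P$ has $O(\logstar n)$ symbols, and claim the survivors there retain a constant fraction of the $m$ characters because the geometric sum of losses is ``dominated by its top term.'' But that top term is itself $\Theta(\logstar n\cdot 3^{j^{\ast}})=\Theta(m)$ with an uncontrolled constant, so domination by the top term is precisely what blocks the conclusion: the accumulated loss can a priori swallow all of $m$, and nothing in your sketch shows the surviving fraction is positive at level $j^{\ast}$. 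To guarantee a nonempty common region one must stop at a lower height, which is what the paper does by taking the maximum $k$ with $2(\logstar n+5)(3+3^2+\cdots+3^k)<(\logstar n+5)3^{k+2}\leq |P|$. At that height, however, the common region may contain up to $m/2^k$ blocks --- polynomially many, not $O(\logstar n)$ --- so your averaging step (``largest of constantly many survivors'') is unavailable there; the counting then certifies for a single variable only the height-based yield $2^k$, which is all the paper's own argument literally extracts as well. In other words, the step you dismiss as ``routine'' mass-counting is exactly the non-routine point of the lemma, and your proposal does not supply it.

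A second, smaller omission: you treat $P$ as a single Type2 string throughout. In general $P$ decomposes into metablocks, and a Type1 repetition has no landmarks at all, so the alphabet-reduction locality argument on which your level-by-level comparison rests simply does not apply inside it. The paper handles this by splitting $P=P_1P_2\cdots P_m$, observing that inner Type1/Type3 metablocks are parsed uniquely in any occurrence, reducing the worst case to $P=P_1P_2P_3$ with a long Type2 middle, and taking the largest core over the metablocks. Your proof needs this case split (or an explicit argument that repetitions are harmless) before the boundary-disagreement recurrence can even be set up.
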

\begin{proof}
We first consider the case that $P$ is a Type2 metablock.
As shown by~\cite{Cormode07}, determining the closest landmark on $S[i]$
depends on $S[i-\logstar n+5,i]$ and $S[i,i+5]$.
Thus, if $S[i,j]=P$, then
the final labels for the inside part $S[i+\logstar n+5,j-5]$
are the same for any occurrence position $i$ of $P$.
The above mentioned matter allows each substring equivalent to 
$S[i+\logstar n+5,j-5]$ to be transferred to a same $S'$.
Since the ESP tree is balanced $2-3$ tree,
any variable in $S'$ encodes at least two symbols.
If $S'$ assumes Type2 again, then this process iterated.
Thus, after $k$ iterations, the length of the string 
encoded by a variable in $S'$ is at least $2^k$.
Meanwhile owing to one iteration, the common substring $S'$ 
loses its prefix and suffix of length at most $\logstar n+5$.
In addition, each lost variable has no less than three children.
By the above observation, we can take an internal node
as a core of $P$ for $S$, whose height is the maximum $k$ satisfying
\[
2(\logstar n+5)(3+3^2+\cdots 3^k)
<(\logstar n+5)3^{k+2}
\leq |P|.
\]
In consideration of the above estimation together with 
the fact that $\logstar n$ is regarded as a constant and 
concurrently a variable in height $k$ encodes a substring of 
the length of the minimum $2^k$, 
a constant $0<\alpha<1$ and a variable is obtained 
as a core of $P$ encoding a substring of length at least $\alpha|P|$.
$P$ is generally divided into metablocks as seen 
in a manner of $P=P_1P_2\cdots P_m$.
Type1 and Type3 metablocks in $P_2\cdots P_{m-1}$ are uniquely parsed
in its any occurrence.  
Thus we can assume $P=P_1P_2P_3$ for a long Type2 metablock $P_2$ 
and Type1 $P_1,P_3$ as a worst case.
For any occurrence of Type1 metablock, 
we can obtain a sufficiently large core.
Choosing a largest core from the three metablocks,
the size is greater than $\alpha |P|$.
\end{proof}

Using Lemma~\ref{core-lemma}, the search problem for $P$ is 
reduced to the other problem for the sequence of adjacent cores.

\begin{lemma}\label{induce-lemma}\rm
For a given ESP tree $T$ of a text $S$ and a pattern $P$, $S[i,j]=P$ iff 
there exist $k=O(\log|P|)$ adjacent subtrees in $T$  
rooted by variables $X_1,\ldots,X_k$ such that the concatenation of all strings
encoded by them is equal to $P$.
\end{lemma}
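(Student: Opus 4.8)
The plan is to prove the two directions of the equivalence separately, with essentially all of the work residing in the ``only if'' direction, since the reverse implication is immediate. For the ``if'' direction, suppose subtrees rooted at variables $X_1,\ldots,X_k$ are pairwise adjacent in the given order (each $X_t$ right adjacent to $X_{t+1}$ in the sense of Fact~\ref{lra}). Adjacency forces their leaf spans to be consecutive, so their union is a contiguous block of leaves of $T$; reading this block off left to right spells exactly the substring of $S$ it covers. Hence if the concatenation of the strings encoded by $X_1,\ldots,X_k$ equals $P$, then $S[i,j]=P$ for $i,j$ the first and last leaf indices of the block.

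For the ``only if'' direction I would argue by peeling the ESP tree level by level, exploiting the local consistency of alphabet reduction already used in Lemma~\ref{core-lemma}: the block boundaries chosen inside an occurrence $S[i,j]=P$ are determined by a context of width $O(\logstar n)$, so away from the two ends of the occurrence the parse is identical for every occurrence of $P$. Concretely, I would maintain at each level $t$ a contiguous run of level-$t$ nodes covering the still-uncovered interior of the occurrence. At level $t$ the nodes lying within $\logstar n+5$ of either end may be grouped inconsistently with respect to the surrounding text, so I would \emph{cut them off} and add each such boundary node to the output cover; each cut-off node is the root of a genuine subtree of $T$. The remaining interior nodes are grouped consistently into level-$(t+1)$ nodes, and the process advances one level.

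Two quantitative facts then yield the bound. First, because every internal node of the balanced $2$-$3$ tree has at least two children, the interior shrinks by a factor of at least two per level, so after $O(\log|P|)$ levels it collapses to $O(1)$ nodes, which become the top cores of the cover. Second, at most $2(\logstar n+5)=O(1)$ subtrees are cut off per level, so the total number of boundary subtrees is $O(\logstar n\cdot\log|P|)=O(\log|P|)$, using that $\logstar n$ is treated as a constant. Collecting, in left-to-right order, the left boundary subtrees, then the $O(1)$ top cores, then the right boundary subtrees, gives a sequence of $k=O(\log|P|)$ subtrees that are pairwise adjacent and whose leaf spans partition exactly $[i,j]$; their concatenated strings therefore equal $P$.

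The step I expect to be the main obstacle is making the boundary bookkeeping rigorous: one must verify that the $O(\logstar n)$-wide boundary zone at each level is genuinely enough to absorb every context-dependent grouping decision, so that the interior really is parsed identically across occurrences, and that the cut-off nodes at successive levels tile the gap between the original endpoints and the shrinking interior with no overlaps or holes, so that the collected subtrees are exactly adjacent in the sense of Fact~\ref{lra}. Handling Type1 and Type3 metablocks at the boundary, as at the end of the proof of Lemma~\ref{core-lemma}, is the remaining technicality; once these are confirmed to contribute only $O(1)$ further subtrees each, the count $k=O(\log|P|)$ follows.
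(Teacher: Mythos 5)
Your proposal is correct, but it takes a genuinely different (and considerably more detailed) route than the paper. The paper's own proof is a three-sentence sketch: the ``if'' direction is dispatched by noting that, absent the bound on $k$, the trivial cover by the leaves $S[i],S[i+1],\ldots,S[j]$ always exists; for ``only if'' it invokes Lemma~\ref{core-lemma} to extract a single core encoding more than $\alpha|P|$ symbols of the occurrence and then asserts that ``the remaining substrings are also covered by their own cores,'' i.e., a top-down recursive core extraction on the prefix and suffix pieces --- exactly what the procedure $FACT$ of Fig.~\ref{ESP-SEARCH} implements. You instead build the cover bottom-up, peeling one ESP level at a time, cutting off the $O(\logstar n+5)$ context-sensitive boundary nodes per side per level and retaining the consistently parsed interior. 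Both arguments run on the same engine (the locality of alphabet reduction quantified inside the proof of Lemma~\ref{core-lemma}), but your version makes explicit two points the paper leaves implicit: that the collected subtrees are pairwise adjacent in the sense of Fact~\ref{lra} and tile $[i,j]$ with no holes, which the lemma statement actually requires, and the count itself. Indeed, read literally, the paper's recursion branches into a prefix piece and a suffix piece at every step; the geometric length decay bounds the recursion depth by $O(\log|P|)$, but a two-way recursion of that depth bounds the number of pieces only by a quantity polynomial in $|P|$, so level-synchronized bookkeeping like yours ($O(\logstar n)$ cut-offs per level, $O(\log|P|)$ levels, with $\logstar n$ treated as a constant) is what actually delivers $k=O(\log|P|)$. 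The obstacles you flag --- verifying that the boundary zone absorbs every context-dependent grouping decision, gap-free tiling by the cut-off subtrees, and Type1/Type3 metablocks at the ends --- are precisely the right ones, and your proposed resolutions are consistent with how the proof of Lemma~\ref{core-lemma} handles the same issues.
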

\begin{proof}
If the bound $k=O(\log|P|)$ is unnecessary, 
trivial subtrees equal to the leaves 
$S[i],S[i+1],\ldots,S[j]$ can always be obtained.
Use of Lemma~\ref{core-lemma} makes it possible to find 
a core that encodes a long substring of $S[i,j]$
longer than $\alpha|j-i|$ for a fixed $0<\alpha<1$.
The remaining substrings are also covered by their own cores, 
from which the bound $k=O(\log |P|)$ is obtained.
\end{proof}

Two algorithms are developed for compression and search based 
on Lemma~\ref{core-lemma} and~\ref{induce-lemma}.
At first, since any ESP tree is balanced $2-3$ tree, 
each production rule is of $X\to AB$ or $X\to ABC$.
The latter is identical to $X\to AB'$ and $B'\to BC$.
Assumption is hereby made exclusively with Chomsky normal form.
A data structure $D$ to access the digram $XY$ from a variable $Z$ 
associated by $Z\to XY$ is called a {\em dictionary}. 
In the meantime, another data structure $D^R$ to compute the reverse 
function $f(XY)=Z$ is called a {\em reverse dictionary}.

{\em ESP-COMP} is described in Fig.~\ref{ESP-COMP} 
with a view to computing the ESP tree of a given string.
This algorithm outputs the corresponding dictionary $D$.
The reverse dictionary $D^R$ is required to replace different 
occurrences of $XY$ by means of a common variable $Z$.
This function, which can be developed by a hash function 
with high probability~\cite{Karp87}, requires large extra 
space regardless of such a circumstance.
In the next subsection, we propose a method to simulate $D^R$ by $D$.
The improvement brought about as above makes it possible to 
compress a given pattern for the purpose of obtaining 
the core exclusively by $D$.

{\em ESP-SEARCH} is described in Fig.~\ref{ESP-SEARCH} to 
count occurrences of a given pattern $P$ in $S$.
To extract the sequence of cores, $P$ is also compressed by {\em ESP-COMP} 
referring to $D^R$ for $S$. Furthermore if $XY$ is undefined in $D^R$, 
a new variable is produced and $D^R$ is updated.
Then {\em ESP-SEARCH} gets the sequence of cores, $X_1,\ldots,X_k$
to be embedded on the parsing tree of $S$.
The algorithm  checks if $X_i$ is left adjacent to $X_{i+1}$ 
for all $i=1,\ldots,k-1$ from a node $v$ labeled by $X_1$.
As we propose several data structures in the next subsection, 
we can access to all such $v$ randomly.
Thus, the computation time is faster than the time to 
traverse of the whole ESP tree, which is proved by the time complexity.

\begin{figure}[tb]
\noindent
\hrulefill
\begin{tabbing} 
~~~~ \= ~~~ \= ~~~ \= ~~~ \= ~~ \= ~~ \= ~~ \= \kill 
\> {\em Algorithm} {\bf\em ESP-COMP} \\
\>Input: a string $S$.\\
\>Output: a CFG represented by $D$ deriving $S$.\\
\>\>initialize $D$;\\
\>\>{\bf while}($|S|>1$)\\
\>\>\>{\bf for-each}($X_k\to X_iX_j$ produced in same level of ESP)\\
\>\>\>\>sort all $X_k\to X_iX_j$ by $(i,j)$;\\
\>\>\>\>rename all $X_k$ in $S$ by $X_\ell$, 
the rank of sorted $X_k\to X_iX_j$;\\ 
\>\>\>\>update $D$ for renovated $X_\ell\to X_iX_j$;\\
\>\>return $D$;\\\\
\> {\em procedure\/} $ESP(S,D)$\\
\>\>compute one iteration of ESP for $S$;\\
\>\>update $D$;\\
\>\>return the resulting string;
\end{tabbing}
\noindent
\hrulefill
\caption{The compression algorithm 
to output a dictionary $D$ for a string $S$. 
We assume the reverse dictionary $D^R$.
}
\label{ESP-COMP}
\end{figure}

\begin{figure}[tb]
\noindent
\hrulefill
\begin{tabbing} 
~~~~ \= ~~~ \= ~~~ \= ~~~ \= ~~ \= ~~ \= ~~ \= \kill 
\> {\em Algorithm} {\bf\em ESP-SEARCH} \\
\>Preprocess: $D\leftarrow ${\em ESP-COMP}$(S)$ for text $S$.\\
\>Input: a pattern $P$.\\
\>Output: the number of occurrences of $P$ in $S$\\
\>\>$count\leftarrow 0$ and $(X_1,\ldots,X_k)\leftarrow FACT(P,D)$;\\
\>\>{\bf for-each}($v$ satisfying $label(v)=X_1$)\\
\>\>\>$i\leftarrow 2$, $t\leftarrow \mbox{\em right}(lra(v))$, and
$type \leftarrow \mbox{true}$;\\
\>\>\>{\bf while}($i\leq k$)\\
\>\>\>\>{\bf if}(a left descendant $v'$ of $t$ satisfies
$label(v')=X_i$)\\
\>\>\>\>\>\>$v\leftarrow v'$, $t\leftarrow \mbox{\em right}(lra(v))$,
and $i\leftarrow i+1$;\\ 
\>\>\>\>{\bf else} $type \leftarrow \mbox{false}$, and break; \\
\>\>\>{\bf if}($type = \mbox{true}$), $count\leftarrow count+1$; \\
\>\> return $count$; \\\\
\> {\em procedure} $FACT(P,D)$\\
\>\>compute the variable by $CORE(P,D)$ which encodes $P[i,j]$;\\
\>\>recursively compute the variables\\
\>\>\>$CORE(pre(P),D)$ for $pre(P)=P[1,i-1]$ and\\
\>\>\>$CORE(suf(P),D)$ for $suf(P)=P[i+1,|P|]$;\\
\>\>return all variables from the left occurrence;\\\\
\> {\em procedure\/} $CORE(P,D)$\\
\>\>$\ell\leftarrow 1$ and $r\leftarrow |P|=m$;\\
\>\>{\bf while}($|P|>1$ and $\ell < r$)\\
\>\>\>$P\leftarrow ESP(P,D)$\\
\>\>\>$\ell\leftarrow (\ell + \lceil \logstar n\rceil+5$) and $r\leftarrow r-5$;\\
\>\>return the symbol $P[1]$;
\end{tabbing}
\noindent
\hrulefill
\caption{The pattern search algorithm 
from the compressed text represented by a dictionary $D$.
We assume the reverse dictionary $D^R$ again.
}
\label{ESP-SEARCH}
\end{figure}

\begin{lemma}\rm\label{th-algo1}
If we assume the reverse dictionary $D^R$ with constant time access,
the running time of {\em ESP-COMP} is $O(u)$
and the height of the ESP tree is $O(\log u)$
for the length of string, $u$.
\end{lemma}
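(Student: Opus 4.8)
The plan is to establish the two assertions separately, proving the height bound first since it drives the time analysis.

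\textbf{Height bound.} By the parsing rule, a single iteration of ESP partitions the current string into blocks each of length two or three and replaces every block by one new symbol. Hence if the string entering an iteration has length $\ell$, the resulting string has length at most $\ell/2$. Writing $\ell_0=u$ for the input and $\ell_t$ for the length after $t$ iterations, I would conclude $\ell_t\le u/2^t$, so the process halts at length one after $t=O(\log u)$ iterations. Since each iteration corresponds to one level of the balanced $2$-$3$ ESP tree, its height is $O(\log u)$. This also fixes the number of executions of the \textbf{while} loop of \emph{ESP-COMP} at $O(\log u)$.

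\textbf{Running time.} I would bound the cost of a single iteration on a string of current length $\ell$ and then sum over levels. One iteration performs: (i) alphabet reduction, which is $\logstar\sigma$ passes over the string, each linear in $\ell$; since $\logstar n$ is treated as a constant and $\sigma\le n$, this is $O(\ell)$; (ii) landmark selection and block formation, each a single linear scan, $O(\ell)$; (iii) block replacement, which for each of the $O(\ell)$ blocks obtains the name $Z$ of its digram $XY$ through the assumed constant-time reverse dictionary $D^R$, again $O(\ell)$; and (iv) the canonical renaming, in which the new rules $X_k\to X_iX_j$ are sorted by the integer key $(i,j)$ and each $X_k$ is renamed to its rank. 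Because every variable name is an integer bounded by the total number of variables, which is $O(u)$, this sort is a radix sort costing $O(\ell)$, and the subsequent renaming is linear. Thus one iteration costs $O(\ell)$, and summing over levels with $\ell_t\le u/2^t$ gives a total of $\sum_{t\ge 0}O(u/2^t)=O(u)$.

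The main obstacle is step (iv): one must argue that the renaming does not inflate the per-level cost beyond linear. The point is that the keys $(i,j)$ range over a universe of size $O(u)$, so radix sort replaces comparison sort and keeps each level linear; without this observation one would incur an extra logarithmic factor per level and only obtain $O(u\log u)$. The constant-time hypothesis on $D^R$ is what makes step (iii) linear, so both halves of the statement ultimately reduce to the geometric decay of the string length established in the height bound.
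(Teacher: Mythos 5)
Your proposal is correct and follows essentially the same route as the paper's own proof: the string at least halves with each iteration (giving the $O(\log u)$ height and loop count), each level costs linear time because the digram renaming is done by radix sort on integer keys and block replacement uses the assumed constant-time $D^R$, and the geometric decay of the level lengths sums to $O(u)$. Your write-up is merely more explicit than the paper's terse argument, in particular in spelling out why the per-level sort stays linear.
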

\begin{proof}
The algorithm shortens a current string to at least half by each
iteration, and all the digrams are sorted in linear time 
by the radix sort in the procedure. 
This outer loop is executed $O(\log u)$ times. 
Thus, the bound is obtained.
\end{proof}

In {\em ESP-SEARCH}, several data structures are assumed
and they are developed in the next subsection.
At this stage the correctness is exclusively ensured, 
which is derived from Lemma~\ref{core-lemma} and~\ref{induce-lemma}.

\begin{lemma}\rm\label{th-algo2}
{\em ESP-SEARCH} correctly counts the occurrences
of a given pattern in the ESP tree of a text.
\end{lemma}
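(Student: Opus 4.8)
The plan is to establish correctness of ESP-SEARCH by showing that the algorithm realizes, step for step, the characterization of occurrences provided by Lemma~\ref{core-lemma} and Lemma~\ref{induce-lemma}, using the adjacency characterization of Fact~\ref{lra}. First I would fix the semantics of the objects the algorithm manipulates: the call $FACT(P,D)$ returns the sequence $(X_1,\ldots,X_k)$ of variables obtained by extracting a central core via $CORE$ and recursing on the prefix and suffix, so that by Lemma~\ref{induce-lemma} the subtrees rooted at $X_1,\ldots,X_k$, read left to right, encode exactly the concatenation equal to $P$, with $k=O(\log|P|)$. The key point to record is that this factorization is \emph{canonical}: because $P$ is compressed against the same dictionary $D$ (equivalently the reverse dictionary $D^R$) used for $S$, the core variables produced for $P$ are identical to the core variables that appear in the ESP tree of $S$ at any genuine occurrence of $P$; this identity of names is precisely what Lemma~\ref{core-lemma} guarantees, since the inside part of a Type2 occurrence is parsed independently of its surrounding context.

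Next I would argue the two directions of correctness as an iterative invariant over the \textbf{while} loop. The loop maintains a node $v$ labelled $X_1$ and, after $i-1$ successful matches, the invariant that the subtrees so far matched correspond to $X_1,\ldots,X_{i-1}$ and are pairwise left-adjacent in the required left-to-right order, with $t=\mbox{\em right}(lra(v))$ being the node from whose left-most path the next variable must be sought. By Fact~\ref{lra}, a node $v'$ is right adjacent to $v$ iff $v'$ lies on the left-most path from $\mbox{\em right}(lra(v))$; hence the test ``a left descendant $v'$ of $t$ satisfies $label(v')=X_i$'' accepts exactly those $v'$ that are right adjacent to the current $v$ and carry the correct label. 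Re-assigning $v\leftarrow v'$ and $t\leftarrow\mbox{\em right}(lra(v))$ re-establishes the invariant for $i+1$. For soundness, when the loop completes with $type=\mbox{true}$ we have produced $k$ successively right-adjacent subtrees with labels $X_1,\ldots,X_k$, so their concatenation spells $P$ and by Lemma~\ref{induce-lemma} this is an occurrence of $P$ in $S$; conversely, for completeness, given any occurrence $S[i,j]=P$, Lemma~\ref{core-lemma} yields a core that $FACT$ reproduces as some $X_1$, and the adjacency structure of that occurrence forces each $X_{i+1}$ onto the left-most path from $\mbox{\em right}(lra(v))$, so the loop succeeds from the node $v$ realizing this core. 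Since the outer \textbf{for-each} ranges over \emph{every} node $v$ with $label(v)=X_1$, each occurrence is counted exactly once and no spurious occurrence is counted.

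The main obstacle I expect is the completeness direction, specifically justifying that \emph{every} occurrence of $P$ in $S$ is anchored at a node labelled $X_1$ that the outer loop actually visits. This requires two things that must be handled carefully: first, that the canonical core chosen by $CORE$ for the pattern coincides with the core that Lemma~\ref{core-lemma} extracts at an arbitrary occurrence inside $S$ — i.e. that the naming is genuinely context-independent on the inside part, so the left-most core variable $X_1$ of $P$ really does occur as a node label in $S$'s ESP tree at each occurrence; and second, that the recursive prefix/suffix decomposition in $FACT$ produces the same adjacency pattern that the occurrence exhibits in $S$, rather than merely some factorization with the correct yield. I would close the first gap by invoking the boundary estimate from the proof of Lemma~\ref{core-lemma} (the inside region $S[i+\logstar n+5,j-5]$ parses identically regardless of $i$), and the second by observing that the core-and-recurse structure is deterministic given $D$, so the factorization of $P$ is unique and therefore matches the unique left-to-right decomposition induced at the occurrence. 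A minor auxiliary point worth stating is that distinct occurrences of $P$ correspond to distinct anchor nodes $v$, which prevents double counting and is immediate from the left-adjacency chain being a function of the starting node $v$.
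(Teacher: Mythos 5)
Your proposal is correct and follows exactly the route the paper intends: the paper gives no explicit proof of Lemma~\ref{th-algo2}, stating only that correctness ``is derived from Lemma~\ref{core-lemma} and~\ref{induce-lemma},'' and your argument is the natural expansion of that derivation --- canonical, context-independent naming of the cores from (the proof of) Lemma~\ref{core-lemma}, the factorization of $P$ into $O(\log|P|)$ adjacent subtrees from Lemma~\ref{induce-lemma}, and the loop invariant matching the adjacency test of Fact~\ref{lra}. Your two flagged obstacles (stability of the recursively extracted boundary cores and uniqueness of the anchor node $v'$ on the leftmost path, which in an ESP tree follows since a variable occurs at a fixed height) are precisely the details the paper leaves implicit, and you close them consistently with the paper's own boundary estimate $S[i+\logstar n+5,\,j-5]$.
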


The time/space complexity of the algorithms depends on 
the performance of the data structure employed.
As a matter of fact, the size of the parsing tree is greater than 
the length of the string for a naive implementation. 
In the next subsection, proposal is made with a compact representation 
of parsing tree and reverse dictionary for the algorithm.

\subsection{Compact representation for ESP}

We propose compact data structures used by the algorithms.
These types of improvement are achieved by means of two techniques: 
one is the decomposition of DAG representation into left/right tree, 
and the other is the simulation of the reverse dictionary $D^R$ 
by the dictionary $D$ with an auxiliary data structure.
First the decomposition of DAG is considered.
Let $G$ be a DAG representation of a CFG in Chomsky normal form.
By introducing a node $v$ together with addition of left/right 
edges from any sink of $G$ to $v$, $G$ can be modified to 
have the unique source and sink.

\begin{fact}\rm\label{spanning-lemma}
Let $G$ be a DAG representation with single source/sink
of a CFG in Chomsky normal form.
For any in-branching spanning tree of $G$,
the graph defined by the remaining edges is also an 
in-branching spanning tree of $G$.
\end{fact}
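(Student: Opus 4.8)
We have a DAG $G$ representing a CFG in Chomsky normal form, with a single source and single sink. In CNF, every internal node has exactly one left child and one right child, so it has out-degree 2 (one left edge, one right edge), while the sink has out-degree 0. The source is the unique node of in-degree 0. An *in-branching* spanning tree has the property that every node except the root (the sink, here, since edges point toward children... let me think about orientation) has out-degree exactly one. The claim is that if we take an in-branching spanning tree $T$, the complementary edge set forms another in-branching spanning tree.

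Let me think carefully about edge directions and what "in-branching" means.

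**Orientation.** Edges go from a node to its children (source = start symbol = root of parsing, sink = the added terminal-collecting node). Each internal node has out-degree 2. An in-branching spanning tree: out-degree of each node except the root is 1. Since total out-edges at an internal node is 2, an in-branching tree picks exactly ONE of the two out-edges at each internal node. So $T$ selects one out-edge per internal node, and the root of $T$ (the sink) has out-degree 0.

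**Counting.** If $G$ has $n$ internal nodes plus the sink, each internal node contributes 2 out-edges, so $|E| = 2n$ (roughly). An in-branching spanning tree on $N = n+1$ nodes has $N-1 = n$ edges. It picks one out-edge at each of the $n$ internal nodes. The remaining edges are exactly the *other* out-edge at each internal node — again one per internal node, $n$ edges total. So the complement has the right count and the right local structure: out-degree one at each internal node, out-degree zero at the sink.

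This is the heart of why the left/right tree decomposition works.

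=== PROOF PROPOSAL ===

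The plan is to argue by a degree-counting and cycle-freeness argument, exploiting the special structure of Chomsky normal form. First I would fix the orientation so that each edge is directed from a node to one of its children; then in the modified DAG $G$ with unique source and sink, every internal node has out-degree exactly two (its left and right edges) and the sink has out-degree zero, while the source is the only node of in-degree zero. Writing $N$ for the number of nodes and noting $G$ has exactly $2(N-1)$ directed edges, I would recall that an in-branching spanning tree is a spanning tree in which every node except the root has out-degree one; since the sink is the only node that can serve as such a root, an in-branching spanning tree must select exactly one of the two out-edges at each internal node, using $N-1$ edges in total.

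The key step is then to observe that the complementary edge set is obtained by selecting, at each internal node, precisely the out-edge \emph{not} chosen by $T$. Hence the complement also consists of exactly one out-edge per internal node and no out-edge at the sink, giving it the same out-degree profile and the same cardinality $N-1$ as an in-branching spanning tree. What remains is to verify that this complementary edge set is actually a tree, i.e.\ connected and acyclic.

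For acyclicity I would use that $G$ is a DAG: any subgraph of $G$ inherits the topological order on nodes, so the complement contains no directed cycle, and because each of its nodes has out-degree at most one, it cannot contain an undirected cycle either (a cycle would force some node to have two out-edges or the edges to orient inconsistently, contradicting the DAG order). Given $N-1$ edges on $N$ nodes with no cycle, the complement is a spanning tree; the out-degree-one condition at every non-sink node then identifies it as in-branching, with the sink as root. I expect the main obstacle to be this last verification that ``one out-edge per internal node, zero at the sink, inside a DAG'' genuinely yields a single connected acyclic component rather than a forest: one must rule out the complement breaking into several weakly connected pieces. I would handle this by following out-edges from any internal node, which—since every internal node has out-degree one and $G$ is acyclic—produces a unique directed path that must terminate at the sink (the only out-degree-zero node), so every node is connected to the sink and the complement is connected, completing the argument.
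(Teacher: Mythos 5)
Your proof is correct. In fact the paper offers no proof at all for this statement---it is asserted as a bare \emph{fact}---so your argument supplies the missing justification, and it is the natural one. The two load-bearing observations are exactly those you identify: (i) after the modification that adds left and right edges from every former sink to the new node $v$, each node other than $v$ has out-degree exactly two, and since $v$ is the only possible root of an in-branching spanning tree (any other root would require $v$ to have an out-edge), such a tree selects exactly one of the two out-edges at each non-sink node; hence the complement consists of precisely the other out-edge at each non-sink node, giving $N-1$ edges and the correct out-degree profile; (ii) in a finite DAG, following the unique out-edges of the complement from any node yields a directed path that cannot revisit a node and can only terminate at the unique out-degree-zero node $v$, so every node reaches $v$ and the complement is a connected, in-branching spanning tree. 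Two small remarks: your separate undirected-cycle argument is sound but redundant---once you have acyclicity (inherited from the DAG) together with the edge count $N-1$, connectivity follows from the forest-component count, or conversely your path-to-sink argument gives connectivity and then $N-1$ edges force acyclicity; either half suffices. Also, it is worth making explicit that possible parallel edges (a rule $X\to AA$, or the pair of edges from a former sink to $v$) cause no trouble, since the complement never contains two out-edges from the same node---a point your one-out-edge-per-node bookkeeping already handles implicitly.
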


An in-branching spanning tree of $G$, which is called the {\em left tree} 
of $G$, is concurrently denoted $T_L$ provided that the tree consists 
exclusively of the left edges. 
Thus the complementary tree is called the {\em right tree} of $G$ 
to be denoted $T_R$.
A schematic of such trees is given in Fig.~\ref{LRtree}.

When a DAG is decomposed into $T_L$ and $T_R$, 
the two are represented by succinct data structures 
for ordered trees and permutations.
Brief description concerning the structures is hereunder made. 
The bit-string by LOUDS~\cite{Delpratt06} for an ordered tree 
is defined as shown below.
We visit any node in level-order from the root.
As we visit a node $v$ with $d\geq 0$ children, we append $1^d0$
to the bit-string beginning with the empty string.
Finally, we add $10$ as the prefix corresponding to an
imaginary root, which is the parent of the root of the tree.
A schematic of the LOUDS representations for $T_L$ and $T_R$ 
is also given in Fig.~\ref{LRtree}.
For $n$ node tree, LOUDS uses $2n+o(n)$ bits 
to support the constant time access to the parent, 
the $i$-th child, and the number of children of a node,
which are required by our algorithm.

For traversing the DAG, we also need the correspondence of 
the set of nodes in one tree to the one in the other.
For this purpose,
we employ the succinct data structure for permutations by~\cite{Munro03}.
For a given permutation $P$ of $N=(0,\ldots,n-1)$,
using $(1+\varepsilon)n\log n+o(1)$ bits space,
the data structure supports to access to $P[i]$ in $O(1)$ time
and $P^{-1}[i]$ in $O(1/\varepsilon)$ time.
For instance, if $P=(2,3,0,4,1)$, then $P[2]=0$ and $P^{-1}[4]=3$,
that is, $P[i]$ is the $i$-th member of $P$ and
$P^{-1}[i]$ is the position of the member $i$.
For each node $i$ in $LOUDS(T_L)$, the corresponding node $j$ 
in $LOUDS(T_R)$ is stored in $P[i]$.
These are also illustrated in Fig.~\ref{LRtree}.

In the compression algorithm in Fig.~\ref{ESP-COMP},
all variables produced in a same level
are sorted by the left hands of production rules\footnote{
In~\cite{Claude09}, 
similar technique was proposed, but variables are sorted by
encoded strings.},
and these variables are renamed by their rank.
Thus, the $i$-th variable in a DAG coincides with node $i$ in $T_L$
since they are both named in level-order.
In accordance with the improvement referred to above, 
storage can be made with the required correspondence 
in nearly $n\log n$ bits. 
Devoid of these characteristics, $2n\log n$ bits 
are required to traverse $G$.

\begin{figure}[bt]
\begin{center}
\includegraphics[scale=.33]{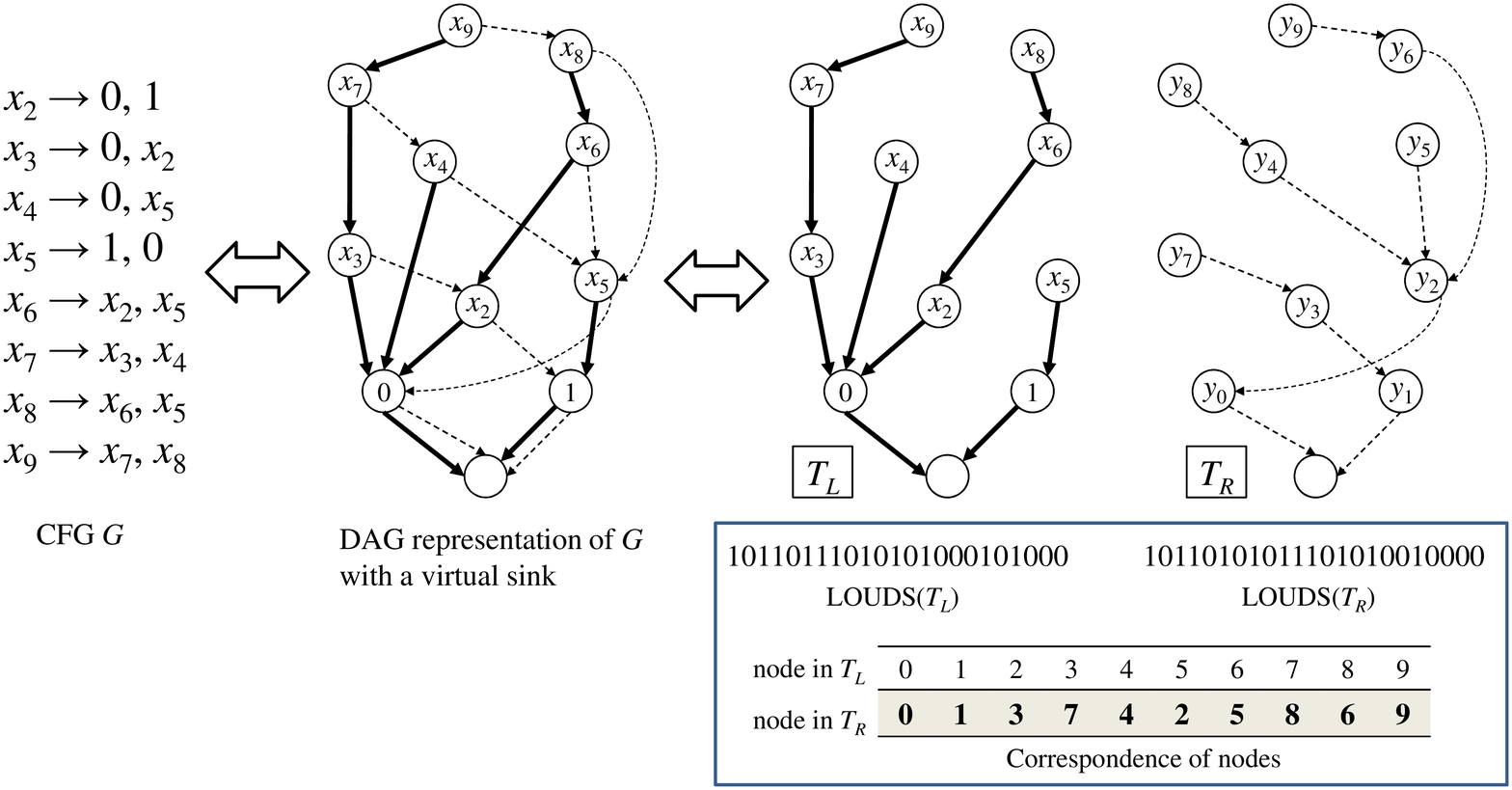}
\end{center}
\caption{
A DAG representing a CFG in Chomsky normal form
and its decomposition into two ordered trees
with their succinct representations.
}
\label{LRtree}
\end{figure}

At the final stage, a method is proposed with a view to 
simulating the reverse dictionary $D^R$ from the data structures 
referred to above.
Adapting this technique makes it possible to reduce the space 
for the hash function to compress a pattern.
Preprocessing causes the $X_k$ to denote the rank of the sorted $X_iX_j$
by $X_k\to X_iX_j$. 
Conversely being given a variable $X_i$, 
the children of $X_i$ in $T_L$ are already sorted by the indexes of 
their parents in $T_R$.
Thus the variable $X_k$ associated to $X_iX_j$
can be obtained by using binary search on the children of $X_i$ in $T_L$, 
of which depiction is made in Fig.~\ref{reverseD}.
Since LOUDS supports the number of the children and $i$-th child, 
access can be made to the middle child $X_i$ in $O(1)$ time. 
Thus we obtain the following lemma. 

\begin{figure}[bt]
\begin{center}
\includegraphics[scale=.4]{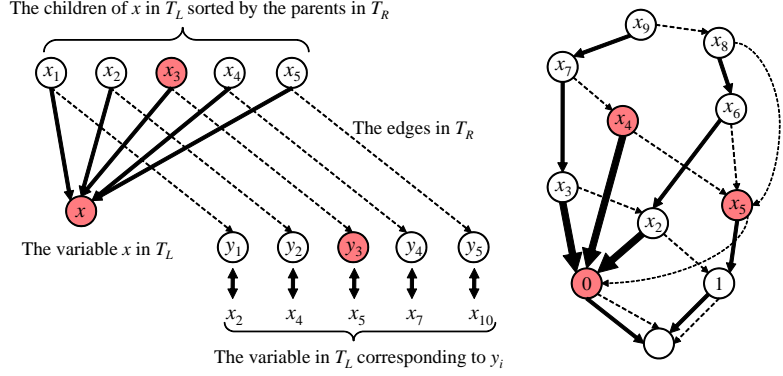}
\end{center}
\caption{The simulation of $D^R$ using binary search 
over the nodes of $T_L$.
For each node $x$ in $T_L$, 
the children $x_i$s of $x$ are already sorted by
the variables in $T_L$
corresponding to the parents of $x_i$s in $T_R$.
}
\label{reverseD}
\end{figure}

\begin{lemma}\rm\label{reverseD-lemma}
The function $f(XY)=Z$ is computable in $O(\frac{1}{\varepsilon}\log k)
=O(\frac{1}{\varepsilon}\log n)$ time for the maximum degree of $T_L$, $k$, 
which is bounded by the number of variables, $n$.
\end{lemma}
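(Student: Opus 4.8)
The plan is to realize $f(XY)=Z$ as a single binary search over a contiguous block of children in the LOUDS representation of $T_L$, charging to each probe only the cost of the permutation operations. First I would pin down the candidate set. In the DAG $G$, the rule $Z\to XY$ contributes the left edge from $Z$ to its left child $X$; since $T_L$ collects exactly the left edges, reversing this edge shows that $Z$ is a child of $X$ when $T_L$ is viewed as the ordered tree encoded by $LOUDS(T_L)$. Hence every variable $Z$ whose left child is $X$ is a child of $X$ in $T_L$, and there are at most $k$ of them, where $k$ is the maximum degree of $T_L$. Thus the reverse lookup reduces to locating, among the children of $X$, the unique one whose right child is $Y$.

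Next I would use the renaming performed in \emph{ESP-COMP} to guarantee that these children are already sorted. In the inner loop the rules produced at each level are sorted by the pair $(i,j)$ of their children's indices and the left-hand side is renamed to its rank; because the renaming is carried out level by level and therefore agrees with the level-order numbering underlying $LOUDS(T_L)$, for a fixed left child $X$ the children of $X$ in $T_L$ appear in increasing order of the index of their right child, equivalently in increasing order of the index of their parent in $T_R$ (Fig.~\ref{reverseD}). This monotonicity is exactly what makes a binary search over the children of $X$ correct, and it is the step I expect to require the most care, since one must verify that the level-order rank used as the new name is consistent with the sort key against which the search compares.

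Finally I would bound the cost of a single probe and multiply. Using LOUDS on $T_L$, the number of children of $X$ and its $\ell$-th child are available in $O(1)$ time, so each probe reaches a candidate child $Z$ in $O(1)$ time. To read off the right child of $Z$, I map $Z$ to its image in $T_R$ by $P[Z]$ in $O(1)$ time, take the LOUDS parent in $T_R$ in $O(1)$ time (this parent is exactly the $T_R$-node of the right child, since reversing the right edge $Z\to Y$ makes $Y$ the parent of $Z$ in $T_R$), and map this $T_R$-node back to a variable index by $P^{-1}$; by the bounds of the permutation structure of~\cite{Munro03} this inverse costs $O(1/\varepsilon)$, which dominates the probe. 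Comparing the resulting index with that of $Y$ drives the search. Since a binary search over at most $k$ children performs $O(\log k)$ probes, the total time is $O(\frac{1}{\varepsilon}\log k)$, and as the degree $k$ cannot exceed the number $n$ of variables this is $O(\frac{1}{\varepsilon}\log n)$, as claimed.
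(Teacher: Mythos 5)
Your proposal is correct and follows essentially the same route as the paper, whose one-line proof simply defers to the preceding observation: children of $X$ in $T_L$ are the variables with left child $X$, the level-by-level renaming in \emph{ESP-COMP} leaves them sorted by the index of their parent in $T_R$, and a binary search with LOUDS child access plus one $P^{-1}$ query per probe gives $O(\frac{1}{\varepsilon}\log k)\subseteq O(\frac{1}{\varepsilon}\log n)$. Your write-up is in fact more explicit than the paper's, notably in spelling out the per-probe cost (constant-time $P[\cdot]$ and $T_R$-parent, with the $O(1/\varepsilon)$ inverse lookup dominating) and in flagging the consistency of the rank-based renaming with the sort key, which the paper asserts without detail.
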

\begin{proof}
The statement is derived from the above observation.
\end{proof}

Using the proposed data structures, the
following theorem is obtained.

\begin{theorem}\rm\label{main-th}
A grammar-based compression $G$ for any string $S$ is represented 
in $(1+\varepsilon)n\log n+4n+o(n)$ bits, 
where $n$ is the number of variables in $G$.
With any pattern $P$, the number of its occurrence in $S$ 
is computable in $O(\frac{1}{\varepsilon}(m\log n+occ_c(\log m\log u)))$ time
for any $0<\varepsilon<1$, 
where $u=|S|$, $m=|P|$, and $occ_c$ is the number of occurrences
of a maximal core of $P$ for $S$.
\end{theorem}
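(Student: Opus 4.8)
The plan is to establish the space bound and the time bound separately, assembling the structural results already proved in this section.

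For the space bound, I would account for the three components of the representation. By Fact~\ref{spanning-lemma} the DAG $G$ decomposes into the two in-branching spanning trees $T_L$ and $T_R$, each having $n$ nodes. Encoding each tree with LOUDS costs $2n+o(n)$ bits, so the two trees together occupy $4n+o(n)$ bits while still supporting $parent$, the $i$-th child, and degree queries in constant time. The correspondence between the two trees is a permutation of $(0,\ldots,n-1)$; here the crucial saving is the level-order naming performed inside \emph{ESP-COMP}, which makes the $i$-th variable of $G$ coincide with node $i$ of $T_L$, so that a \emph{single} permutation (rather than two) suffices to link $T_L$ to $T_R$. Storing it with the structure of~\cite{Munro03} costs $(1+\varepsilon)n\log n+o(1)$ bits and supports $P[i]$ in $O(1)$ time and $P^{-1}[i]$ in $O(1/\varepsilon)$ time. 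Summing the three contributions yields $(1+\varepsilon)n\log n+4n+o(n)$ bits.

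For the time bound I would split the cost into a pattern-compression phase and a search phase. In the compression phase, \emph{ESP-COMP} is run on $P$ using the reverse dictionary, but by Lemma~\ref{reverseD-lemma} each reverse-dictionary call $f(XY)=Z$ is simulated in $O(\frac{1}{\varepsilon}\log n)$ time by a binary search over the children of a node in $T_L$. Since each ESP iteration shortens the current string to at most half its length (Lemma~\ref{th-algo1}), the total number of symbols processed over all $O(\log m)$ levels is $O(m)$, and hence all reverse-dictionary lookups together cost $O(\frac{1}{\varepsilon}m\log n)$. This produces, via Lemma~\ref{induce-lemma}, the sequence of $k=O(\log m)$ adjacent cores $X_1,\ldots,X_k$, and so accounts for the $\frac{1}{\varepsilon}m\log n$ term.

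For the search phase, I would appeal to Lemmas~\ref{core-lemma} and~\ref{induce-lemma}: by Lemma~\ref{core-lemma} the maximal core encodes a substring longer than $\alpha|P|$, so every occurrence of $P$ contains an occurrence of this core, and there are exactly $occ_c$ such occurrences to examine as starting nodes $v$. For each starting node, \emph{ESP-SEARCH} verifies the $k-1$ adjacencies $X_i\to X_{i+1}$ using Fact~\ref{lra}: right-adjacency is checked by walking the leftmost path from $\mbox{\em right}(lra(v))$ in search of the label $X_{i+1}$. Each such path has length at most the height of the ESP tree, which is $O(\log u)$ by Lemma~\ref{th-algo1}, and each navigation step crossing between $T_L$ and $T_R$ invokes the inverse permutation at cost $O(\frac{1}{\varepsilon})$. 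Thus one occurrence is verified in $O(\frac{1}{\varepsilon}\cdot k\cdot\log u)=O(\frac{1}{\varepsilon}\log m\log u)$ time, and over all $occ_c$ occurrences the search costs $O(\frac{1}{\varepsilon}\,occ_c\log m\log u)$. Adding the two phases gives the claimed $O(\frac{1}{\varepsilon}(m\log n+occ_c(\log m\log u)))$ bound.

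The main obstacle I anticipate lies in the search-phase analysis rather than in the bookkeeping of space. Two points need care. First, one must justify that the nodes iterated over are precisely the $occ_c$ occurrences of the maximal core and that checking adjacency of the entire core sequence neither misses a genuine occurrence of $P$ nor double-counts it; this rests on the correctness already asserted in Lemma~\ref{th-algo2} together with the adjacency characterization of Fact~\ref{lra}. Second, one must certify that each adjacency check, realized through $lra$ followed by a leftmost-path descent implemented over the decomposed trees, stays within a single root-to-leaf height $O(\log u)$ and incurs only the $O(\frac{1}{\varepsilon})$ permutation-inversion overhead per crossing, so that no hidden logarithmic factor is introduced into the per-occurrence cost.
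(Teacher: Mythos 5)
Your proposal is correct and takes essentially the same approach as the paper: the paper's own proof consists only of your final step --- starting the verification at the maximal core $X_i$ and checking the left adjacency of $X_1,\ldots,X_i$ and the right adjacency of $X_{i+1},\ldots,X_k$ so that the outer loop runs exactly $occ_c$ times --- while your space accounting (two LOUDS trees at $2n+o(n)$ bits each plus one permutation at $(1+\varepsilon)n\log n$ bits, saved by the level-order naming), the $O(\frac{1}{\varepsilon}m\log n)$ pattern-compression cost via Lemma~\ref{reverseD-lemma}, and the per-occurrence $O(\frac{1}{\varepsilon}\log m\log u)$ verification are precisely the ingredients the paper assembles from the preceding subsection and lemmas. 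The one point you should state explicitly rather than leave as an anticipated obstacle is the bidirectional adjacency check from the maximal core (leftward via the $lla$ characterization and rightward via $lra$ in Fact~\ref{lra}), since iterating from occurrences of $X_1$ as the algorithm is literally written would give a bound in the occurrences of $X_1$ rather than $occ_c$ --- this modification is exactly what the paper's proof records.
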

\begin{proof}
When the cores $X_1,\ldots,,X_k$ are obtained by the procedure $FACT(P,D)$,
let $X_i$ be one of them.
Modification can easily be made with the search algorithm
to check both the left adjacency of $X_1,\ldots,X_k$ and 
the right adjacency of $X_{i+1},\ldots,X_k$ starting at $X_i$.
Thus the search time is bounded by $occ_c$ choosing 
a maximal core from $X_1,\ldots,X_k$.
\end{proof}

\section{Experiments}
The experiments are conducted in the environment shown below.
OS:CentOS 5.5 (64-bit), CPU:Intel Xeon E5504 2.0GHz (Quad)$\times$2,
Memory:144GB RAM, HDD:140GB, and Compiler:gcc 4.1.2.

Datasets are obtained from the text collection in Pizza\&Chili 
Corpus\footnote{http://pizzachili.dcc.uchile.cl/texts.html} 
to compare hereto referred method called ESP with other compressed 
indexes called  LZ-index (LZI)\footnote{http://pizzachili.dcc.uchile.cl/indexes/LZ-index/LZ-index-1}, Compressed Suffix Array, and FM-index (CSA and FMI)\footnote{http://code.google.com/p/csalib/}.
These implementations are based on~\cite{Navarro04JDA,Navarro07,Sadakane03}.
Due to the trade-off in the construction time and the index size, 
the index referred to above and other methods for reasonable parameters 
are examined.
In our algorithm, setting is made with $\varepsilon=1,1/4$ for the permutation.
In CSA, the option $(\mbox{-P1:L})$ means that
$\psi$ function is encoded by the gamma function
and $\mbox{L}$ specifies the block size for storing $\psi$.
In FMI, $(\mbox{-P4:L})$ means that
BW-text is represented by Huffman-shaped wavelet tree
with compressed bit-vectors
and $\mbox{L}$ specifies the sampling rate for storing rank values,
and $(\mbox{-P7:L})$ is the uncompressed version.
In addition these CSA and FMI do not make indexes for occurrence position.
Setting up is made with 200MB texts for each DAN and ENGLISH 
to evaluate construction time, index size, and search time.

\begin{figure}[tb]
\begin{center}
\includegraphics[scale=.33]{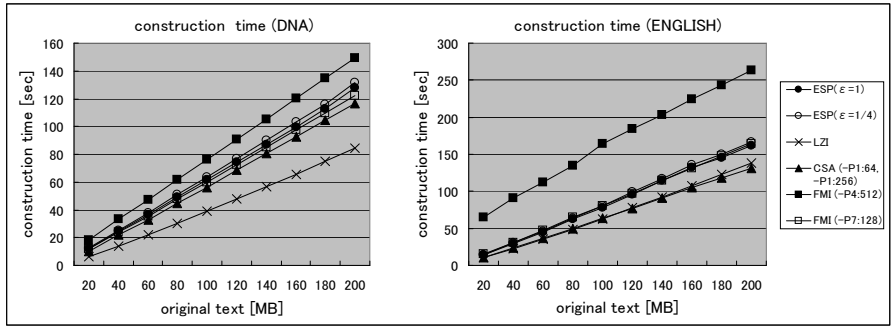}
\end{center}
\caption{Construction Time.}
\label{construction-time}
\end{figure}

The results in construction time are shown in Fig.~\ref{construction-time}.
It is deduced from these results that the method dealt with 
at this stage is comparable with FMI and CSA in the parameters
in construction time, and slower than LZI.
Furthermore it is understood that none of conspicuous difference 
is seen in construction time so long as the value of $\varepsilon$ 
stand still from 1 to $1/4$.

\begin{figure}[tb]
\begin{center}
\includegraphics[scale=.33]{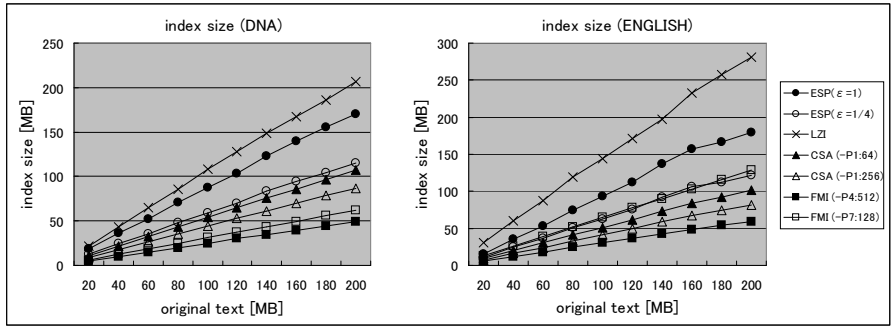}
\end{center}
\caption{Index Size.}
\label{index-size}
\end{figure}

The results of index size are shown in Fig.~\ref{index-size}.
The results reveal that the index is furthermore compact enough 
and comparable to CSA(-P1:64).
The size of LZI contains the space to locate patterns.

\begin{figure}[tb]
\begin{center}
\includegraphics[scale=.33]{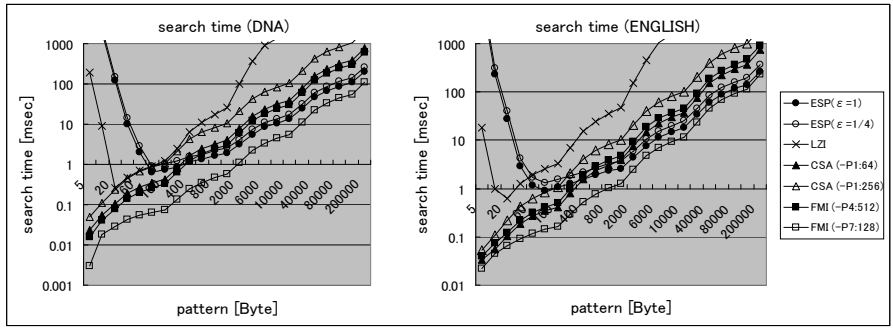}
\end{center}
\caption{Search Time.}
\label{search-time}
\end{figure}

The indexes in Fig.~\ref{search-time} show 
the time to count all occurrences of 
a given pattern in the text.
The indexes are aligned to accomplish the maximum texts in DNA and 
ENGLISH (200MB each).
Random selection of pattern from the text is made 1000 times 
for each fixed pattern length,
and the search time indicates the average time.
In this implementation, we modified our search algorithm 
so that the core is extracted by a short prefix of a given pattern $P$ 
and an occurrence of $P$ in $S$ is decided by the single core and 
the exact match of the remaining substrings by partial decoding of 
the compressed $S$.
To determine length or the short prefix, the rate $1\%$ of 
the pattern by preliminary experiments is taken up.
In DNA and ENGLISH, our method is faster searchable 
than LZI and CSA in the parameters for long patterns.
The proposed method is liable to be behind the pattern 
with short length in case of searching, which might be for
the reason why the occurrence number is relatively made multiplied, 
and comparison of variables are executed for the individual occurrences. 

From the experimental result referred to above, 
it is ascertained that the proposed method, 
which is believed to be subject to settlement of pattern length 
or parameter settlement, can acquire sufficient performance 
as index for pattern searching.

In addition we examine the effect of the parameter $\varepsilon$.
Fig.~\ref{epsilon_trade-off} shows the tradeoff of search time and index size for $\varepsilon$.
The ESP index is constructed for ENGLISH 100MB and the length of pattern is fixed by 10000.
By this figure, the setting $\varepsilon =1/4$ is reasonable.

\begin{figure}[tb]
\begin{center}
\includegraphics[scale=.33]{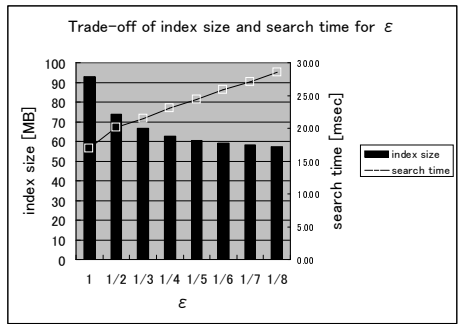}
\end{center}
\caption{Effect of parameter $\varepsilon$.}
\label{epsilon_trade-off}
\end{figure}

We show further experimental results in 
repetitive texts\footnote{http://pizzachili.dcc.uchile.cl/repcorpus.html}
to compare ESP index with another index specifically oriented to repetitive texts, 
called RLCSA\footnote{http://pizzachili.dcc.uchile.cl/indexes/RLCSA/}.
The results are shown in Fig.~\ref{construction-time_r}, Fig.~\ref{index-size_r},
and Fig.~\ref{search-time_r}.
These results reinforce the efficiency of ESP index.

\begin{figure}[tb]
\begin{center}
\includegraphics[scale=.33]{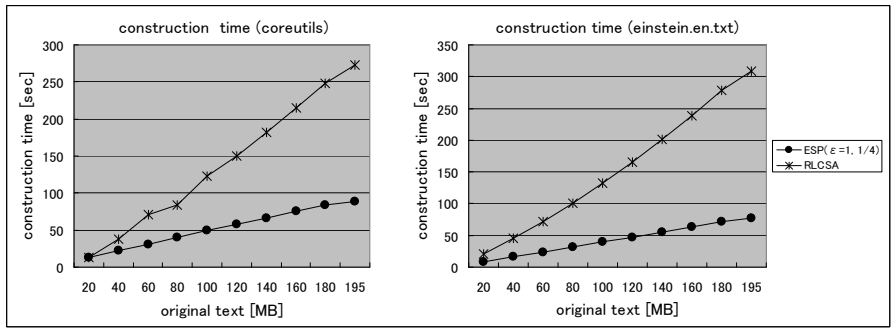}
\end{center}
\caption{Construction time for repetitive texts.}
\label{construction-time_r}
\end{figure}

\begin{figure}[tb]
\begin{center}
\includegraphics[scale=.33]{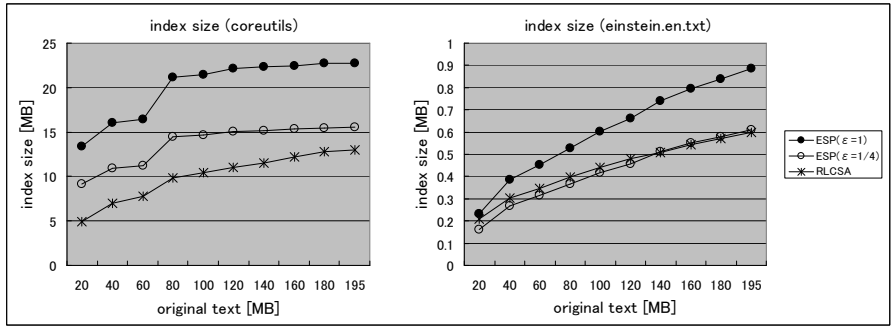}
\end{center}
\caption{Index size for repetitive texts.}
\label{index-size_r}
\end{figure}

\begin{figure}[tb]
\begin{center}
\includegraphics[scale=.33]{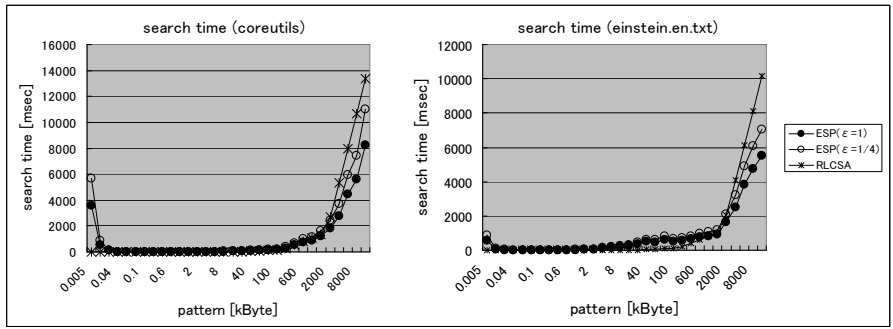}
\end{center}
\caption{Search time for repetitive texts.}
\label{search-time_r}
\end{figure}

\section{Discussion}
We proposed a searchable grammar-based compression for ESP.
Theoretically, this improves the size of naive representation of CFG and 
supports several operations for the compressed strings,
and its performance was confirmed by the implementation
for several benchmarks.

We have another motivation to apply our data structures to practical use.
Originally, ESP was proposed to solve a difficult variant 
of the edit distance problem by finding many maximal common 
substrings of two strings.
Thus, our method will exhibit its ability in case that 
a pattern is as long as a string.
Such situation is found in the framework of
normalized compression distance~\cite{Cilibrasi05}
to compare two long strings directly.
Then we can extract similar parts from very large texts by compression.

\bibliographystyle{plain}
\bibliography{compress,book,cpm,succinct}  

\end{document}